  \pgfplotsset{compat=newest}
\newcommand{\R}{\mathbb{R}}
\newcommand{\bmat}[1]{\begin{bmatrix}#1\end{bmatrix}}
\newcommand\norm[1]{\left\lVert#1\right\rVert}
\newtheorem{theorem}{Theorem}
\newtheorem{definition}{Definition}
\newtheorem{remark}{Remark}
\newtheorem{lemma}{Lemma}
\newtheorem{property}{Property} % if you need a4paper
\title{\LARGE \bf
Imitation Learning with Stability and Safety Guarantees
}
\author{He Yin, Peter Seiler, Ming Jin and  Murat Arcak% <-this % stops a space
\thanks{Funded in part by the Air Force Office of Scientific Research grant FA9550-18-1-0253, the Office of Naval Research grant N00014-18-1-2209, and  the U.S. National
	Science Foundation (NSF) grants ECCS-1906164. M. Jin acknowledges the funding from NSF EPCN-2034137.}
\thanks{H. Yin and M. Arcak are with the University of California, Berkeley {\tt\small \{he\_yin, arcak\}@berkeley.edu.}}
\thanks{P. Seiler is with the University of Michigan,  Ann Arbor {\tt\small pseiler@umich.edu.}}
\thanks{M. Jin is with the Virginia Tech {\tt\small jinming@vt.edu.}}
}
\begin{document}

% XXX - Using symbols for footnotes rather than numbers.
% (Reason: Superscript "1" for a footnote gets lost given that
% we are using superscripts for the NN indices).
\renewcommand*{\thefootnote}{\fnsymbol{footnote}}

\maketitle
\thispagestyle{empty}
\pagestyle{empty}
\setcounter{footnote}{1}

%%%%%%%%%%%%%%%%%%%%%%%%%%%%%%%%%%%%%%%%%%%%%%%%%%%%%%%%%%%%%%%%%%%%%%%%%%%%%%%%
\begin{abstract}
  A method is presented to learn neural network (NN) controllers with stability and safety guarantees through imitation learning (IL). Convex stability and safety conditions are derived for linear time-invariant plant dynamics with NN controllers by merging Lyapunov theory with local quadratic constraints to bound the nonlinear activation functions in the
  NN. These conditions are incorporated in the IL process, which  minimizes the IL loss, and maximizes the volume of the region of attraction associated with the NN controller simultaneously. An alternating direction method of
multipliers based algorithm is proposed to solve the IL problem. The  method is illustrated on an inverted pendulum system, aircraft longitudinal dynamics, and vehicle lateral dynamics.
\end{abstract}

  % A method is presented to analyze the stability of feedback systems
  % with neural network controllers. Specifically, it proves that the
  % feedback system is locally stable, and computes the largest
  % inner-approximation to the region of attraction. The proposed method
  % merges Lyapunov theory with the notion of integral quadratic
  % constraints (IQCs) from robust control theory. The use of IQCs
  % enables us to incorporate the information of the nonlinear
  % activation functions into the Lyapunov condition, and to consider a
  % large class of perturbations, including unmodeled dynamics,
  % slope-restricted nonlinearities, and time delay. To reduce the
  % conservatism in modeling activation functions, local sector IQCs are
  % derived. A computation algorithm is derived using semidefinite
  % programming. The effectiveness of the method is illustrated on a
  % nonlinear inverted pendulum example, as well as vehicle lateral
  % dynamics with actuator uncertainty.

\section{Introduction}
    Imitation learning (IL) is a class of methods that learns a policy to attain a control goal consistent with expert demonstrations \cite{osa2018algorithmic, ALVINN1989}. Used in tandem with deep neural networks (NNs), IL presents unique advantages, including a substantial increase in sample efficiency compared to reinforcement learning (RL) \cite{sun2017deeply}, and wide applicability to domains where the reward model is not accessible or on-policy data is difficult/unsafe to acquire~\cite{osa2018algorithmic}. While IL is closely related to supervised learning as it trains a mapping from observations to actions \cite{daume2009search}, a key difference is the ensuing deployment of the learned policy under dynamics, which consequently raises the issue of closed-loop stability. This problem naturally falls within the realm of robust control, which analyzes stability for uncertain linear or nonlinear systems; however, a major technical challenge is to derive nonconservative guarantees for highly nonlinear policies such as NNs that can be also tractably incorporated into the learning process. 
    
    This letter tackles this issue and presents a method to learn NN controllers with stability and safety guarantees through IL. We first derive convex stability and safety conditions for linear time-invariant (LTI) plant dynamics by merging Lyapunov theory with local sector quadratic constraints (QCs) to describe the activation functions in the NN. Then we incorporate these constraints in the IL process that minimizes the IL loss, and maximizes the volume of
    the region of attraction associated with the NN controller. Finally, we propose an alternating direction method of multipliers (ADMM) based method to solve the IL problem.
    
%    The method of using QCs to characterize the activation functions has been proposed in \cite{2019Fazlyab}, and applied to Lipschitz constant estimation for NNs \cite{FazlyabLip}, reachability analysis of NN controlled systems \cite{Haimin2020}, and stability analysis of NN controlled systems \cite{Yinstabanaly, pauli2020offset, jin2018stability}. Reference \cite{Pauli2020} proposes an ADMM based robust NN training method that minimizes the training loss, and the Lipschitz constant of NNs simultaneously. Reference \cite{revay2020convex} formulates convex sets of recurrent NN with bounded incremental $\ell_2$ gain using incremental QCs. 
    
    While there are many works focusing on NN robustness analysis using QCs, e.g., NN safety verification \cite{2019Fazlyab}, NN Lipschitz constant estimation \cite{FazlyabLip}, reachability analysis \cite{Haimin2020} and stability analysis \cite{Yinstabanaly, pauli2020offset, jin2018stability} of NN controlled systems, there are relatively fewer results regarding NN synthesis, including NN training with bounded Lipschitz constants \cite{Pauli2020}, and recurrent NN training with bounded incremental $\ell_2$ gains~\cite{revay2020convex}.
     
     Compared to existing works, this letter makes the following contributions. First, it presents a safe IL algorithm to synthesize NN controllers, which trades off between IL accuracy and the size of the stability margin of the NN controller. To the best of our knowledge, this is the first attempt to ensure stability of IL based on deep NNs. %Also, we ``open the hood'' of neural networks to derive a local quadratic constraint, which is incorporated to analyze the region of attraction. 
     Second, the stability condition from \cite{Yinstabanaly} is nonconvex and thus computationally intractable for NN control synthesis; here we convexify this constraint (using loop transformation) for its efficient enforcement in the learning process. %Lastly, we propose an ADMM based algorithm to solve the stability-constrained IL problem, a feasible solution of which is guaranteed to be stable for the closed-loop system. 
     % Notably, a well-known challenge in IL is the existence of suboptimal demonstrations. 
     Third, as demonstrated by the numerical example in Section~\ref{sec:GTM_ex}, while the proposed approach can train a policy that imitates the expert demonstrations, it can  potentially improve local stability over  suboptimal expert policies, thus enhance the robustness of IL.
     
     Other related works include approximating explicit model predictive control law using NNs for linear parameter varying plant dynamics \cite{MomoIL}, and LTI plant dynamics \cite{karg2020efficient, SChen2018} to expedite the online evaluation of controllers. Compared with these works, the proposed method provides stability guarantees for the learned NN controller.

    The stability and safety issues for learning (especially RL) based control has also been addressed by a Hamilton-Jacobi reachability-based framework in \cite{fisac2018general}, a control barrier function based method in \cite{taylor2020control,lindemann2020learning}, and a convex projection operator based method in \cite{donti2020enforcing}.
     
%     The paper is organized as follows. Section~\ref{sec:problem_setup} presents the problem formulation. Section~\ref{sec:nomanaly} reviews the stability condition proposed in \cite{Yinstabanaly}, and Section~\ref{sec:convex_condition} presents a loop transformation based method to convexify it. Section~\ref{sec:safeLearn} formulates the safe IL problem with the convexified stability condition. Section~\ref{sec:example} illustrates the method on an inverted pendulum system. Section~\ref{sec:conclusion} summarizes the results.

\emph{Notation:} $\mathbb{S}^n$, $\mathbb{S}_+^n$ and $\mathbb{S}_{++}^n$ denote
the sets of $n$-by-$n$ symmetric, positive semidefinite and positive
definite matrices, respectively.  
%$\mathbb{RL}_{\infty}$ is the set of
%rational functions with real coefficients and no poles on the
%unit circle.  $\mathbb{RH}_{\infty} \subset \mathbb{RL}_{\infty}$
%contains functions that are analytic in the closed exterior of the
%unit disk in the complex plane.  $\ell^{n_x}_2$ is the set of
%sequences $x: \mathbb{N} \rightarrow \R^{n_x}$ with
%$\norm{x}_2 := \sqrt{\sum_{k=0}^\infty x(k)^\top x(k)} < \infty$.  
% For $v,w \in \R^n$, the inequality $v \le w$ is interpreted element-wise,
% i.e. $v_i \le w_i$ for $i=1,\ldots n$.  Similarly, $v \ge w$ is an
% elementwise inequality.  
When applied to vectors, the orders $>, \leq$ are applied elementwise.
For $P \in \mathbb{S}_{++}^{n}$,
 define the ellipsoid
\begin{align} \label{eq:epsil_def}
\mathcal{E}(P):= \{x \in \R^n : x^\top P x \leq
1\}. 
\end{align}

% XXX - Old text
% $\mathbb{R}^{m\times n}$ and $\mathbb{S}^n$ denote the set of
% $m$-by-$n$ real matrices and $n$-by-$n$ real, symmetric matrices.  A
% single superscript index denotes vectors; for example,
% $\mathbb{R}^m$ is the set of $m \times 1$ vectors whose elements are
% in $\mathbb{R}$.

\section{Problem Formulation}\label{sec:problem_setup}

Consider the feedback system consisting of a plant $G$ and
state-feedback controller $\pi$ as shown in
Figure~\ref{fig:NominalFeedback}. We assume the plant $G$ is a linear,
time-invariant (LTI) system defined by the following discrete-time model:
\begin{align}
\label{eq:NominalSys}
x(k+1) &= A_G\ x(k) + B_{G} \ u(k),
\end{align} % xxx should we write out u(k) = \pi(x(k)) here to introduce the notation \pi?
where $x(k) \in \R^{n_G}$ is the state and $u(k) \in \R^{n_u}$ is the control, $A_G \in \R^{n_G \times n_G}$ and $B_G \in \R^{n_G \times n_u}$.  Finally, assume $x(k)$ is constrained to a set $X \subset \R^{n_G}$, which is referred to as the ``safety condition''. This state constraint set is assumed to be a polytope symmetric around the origin:
\begin{align}
    X = \{x \in \R^{n_G}: -h \leq H x \leq h, \ h \ge 0\},
\end{align}
where $H \in \R^{n_X \times n_G}$, and $h \in \R^{n_X}$.
The controller $\pi: \R^{n_G} \rightarrow \R^{n_u}$ is an $\ell$-layer,
feedforward neural network (NN) defined as:
\begin{subequations}\label{eq:NNlong}
  \begingroup
\allowdisplaybreaks
\begin{align}
w^0(k) &= x(k), \label{eq:def_w0}\\
\label{eq:NNlong_wi}
w^i(k) &= \phi^i\left(\ W^i w^{i-1}(k) + b^i \ \right), 
           \ i = 1, \ldots , \ell, \\
u(k) &= W^{\ell+1} w^{\ell}(k) + b^{\ell+1}, \label{eq:output_layer}
\end{align}
\endgroup
\end{subequations}
where $w^i \in \R^{n_i}$ are the outputs (activations) from the
$i^{th}$ layer and $n_0 = n_G$.  The operations for each layer are defined by a weight
matrix $W^i \in \R^{n_i \times n_{i-1}}$, bias vector
$b^i \in \R^{n_i}$, and activation function
$\phi^{i}: \R^{n_i} \rightarrow \R^{n_i}$. The activation function
$\phi^i$ is applied element-wise, i.e.
\begin{align}
\phi^i(v) := \bmat{\varphi(v_1), \cdots, \varphi(v_{n_i})}^\top,
\end{align}
where $\varphi: \R \rightarrow \R$ is the (scalar) activation function
selected for the NN. Common choices for the scalar
activation function include  $\tanh$, ReLU, and leaky ReLU. We assume the  activation $\varphi$ is identical in all layers; this can be relaxed with minor changes to the notation. We also assume that the activation $\varphi$ satisfies $\varphi(0) = 0$.

Our goal is to stabilize an equilibrium, which we assume is shifted to the origin as is standard in control design. To ensure that the equilibrium remains at the origin with the NN controller, we impose the constraint $\pi(0) = 0$. Since $\pi(0) = 0$ translates to a nonconvex constraint on $(W^i, b^i)$, we set all the bias terms to be zero: $b^i
	= 0_{n_i \times 1}$, for $i = 1,..., \ell+1$.

\begin{remark}	\label{remark:zero_bias}
Setting the bias terms to zero is arguably an underuse of NNs.
Whether $\pi(0)=0$ can be achieved with less restrictive convex constraints
is an interesting problem that deserves further research.
\end{remark}

\vspace{-0.1in}
\begin{figure}[h]
	\centering
	\includegraphics[width=0.4\textwidth]{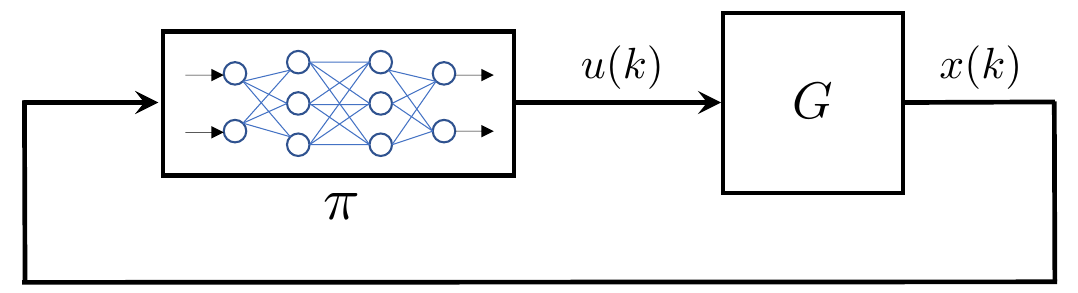}
	\caption{Feedback system with plant $G$ and NN $\pi$}
	\label{fig:NominalFeedback}    
\end{figure}
\vspace{-0.1in}

If control constraint sets are hypercubes, they can be considered in the framework by applying activation functions, like $\tanh$, elementwise to the output layer~\eqref{eq:output_layer}.

Let $\chi(k;x_0)$ denote the solution to the feedback system at time $k$
from initial condition $x(0)=x_0$. The region of attraction (ROA) is defined below. 
\begin{definition}
  \label{def:ROA}
  The region of attraction (ROA) of the feedback system with plant $G$ and NN $\pi$ is defined
  as
  \begin{align}
    \mathcal{R} := \{x_0 \in X: \lim_{k \rightarrow \infty} \chi(k;x_0) = 0_{n_G \times 1}\}.
  \end{align}
\end{definition}

Given state and control data pairs from the expert demonstrations, our goal is to learn a NN controller from the data to reproduce the demonstrated behavior, while guaranteeing the system trajectories under the control of the learned NN controller satisfy the safety condition $(x(k) \in X \ \forall k \ge 0)$, and are able to converge to the equilibrium point if they start from the ROA associated with the learned NN controller.

%The objective is to prove that the feedback system is locally
%stable and, if so, to find the largest estimate of the ROA using %an
%ellipsoidal inner approximation.

\section{Stability and Safety Conditions for NN Controlled Systems} \label{sec:nomanaly}
In this section, we treat the parameters of the NN controller as fixed and derive the safety and local stability conditions of the NN-controlled LTI systems.

\vspace{-0.1cm}
\subsection{NN Representation: Isolation of Nonlinearities}

It is useful to isolate the nonlinear activation functions from the
linear operations of the NN as done in \cite{2019Fazlyab,2018Kim}.
Define $v^i$ as the input to the activation function $\phi^i$ (recalling that $b^i=0_{n_i \times 1}$ by Remark~\ref{remark:zero_bias}):
\begin{align}
\label{eq:PhiInput}
  v^i(k) := W^i w^{i-1}(k),
   \ i = 1, \ldots , \ell.
\end{align}
The nonlinear operation of the $i^{th}$ layer
\eqref{eq:NNlong_wi} is thus expressed as
$w^i(k) = \phi^i( v^i(k) )$. Gather the inputs and
outputs of all activation functions:
\begin{align}
  v_\phi := \left[\begin{smallmatrix}v^1 \\ \vdots \\ v^{\ell}\end{smallmatrix}\right] \in \R^{n_\phi} 
\mbox{ and }
  w_\phi := \left[\begin{smallmatrix} w^1 \\ \vdots \\ w^{\ell}\end{smallmatrix}\right] \in \R^{n_\phi},
\end{align}
where $n_\phi := \sum_{i=1}^\ell n_i$, and define the combined
nonlinearity $\phi: \R^{n_\phi} \rightarrow \R^{n_\phi}$ by stacking
the activation functions:  
\begin{align}
    \phi(v_\phi) := \left[\begin{smallmatrix} \phi^1(v^1) \\ \vdots \\ \phi^\ell(v^\ell)\end{smallmatrix}\right]. 
\label{eq:phi_def}
\end{align}
Thus
$w_\phi(k) = \phi( v_\phi(k) )$, where the scalar activation function
$\varphi$ is applied element-wise to each entry of $v_\phi$.
Finally, the NN control policy $\pi$ defined in \eqref{eq:NNlong} can be rewritten as:
\begin{subequations}\label{eq:NNlft}
\begin{align}
  \left[\begin{smallmatrix}u(k)\\ v_\phi(k)\end{smallmatrix}\right] & = N  \left[\begin{smallmatrix}x(k) \\ w_\phi(k)\end{smallmatrix}\right] \label{eq:def_N}\\
  w_\phi(k) & = \phi( v_\phi(k) ).
\end{align}  
\end{subequations}
The matrix $N$ depends on the weights as follows, where the vertical and horizontal bars  partition $N$ 
compatibly with the inputs $(x,w_\phi)$ and outputs $(u,v_\phi)$:
\begin{align*}
N & = {\tiny\left[ \begin{array}{c|cccc} 
  0 & 0 &  0 & \cdots & W^{\ell+1}  \\ \hline
  W^1 & 0   & \cdots & 0 & 0  \\ 
  0   & W^2 & \cdots & 0 & 0 \\
  \vdots & \vdots & \ddots & \vdots & \vdots  \\
  0   & 0   & \cdots & W^\ell & 0 
 \end{array}\right]} := \bmat{ N_{ux} & N_{uw}  \\ N_{vx} & N_{vw} }.
\end{align*}
This decomposition of the NN, depicted in
Figure~\ref{fig:NNLFT}, isolates the nonlinearities $\phi$ in preparation for the stability analysis.

\vspace{-0.1in}
\begin{figure}[h]
  \centering
  \includegraphics[width=0.22\textwidth]{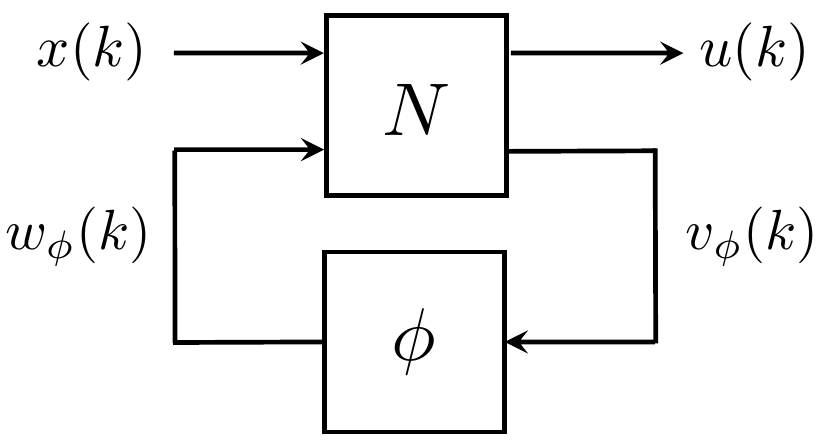}
  \caption{NN representation to isolate the nonlinearities $\phi$.}
  \label{fig:NNLFT}    
\end{figure}

%The equilibrium state $x_*=0_{n_G\times 1}$ can be propagated through the NN.  This yields the equilibrium values $v_*=w_*=0_{n_\phi\times 1}$.
\vspace{-0.2cm}
\subsection{Quadratic Constraints: Scalar Activation Functions}

The stability analysis relies on quadratic constraints (QCs) to bound 
the activation function.  
%First consider a scalar activation function $\varphi:\R \rightarrow \R$.  
A typical
constraint is the sector bound as defined next.
\begin{definition}
  \label{def:GlobalSector}
  Let $\alpha \le \beta$ be given.  The function
  $\varphi: \R \rightarrow \R$ lies in the (global) sector
  $[\alpha,\beta]$ if:
  \begin{align}
    ( \varphi(\nu) - \alpha \nu ) \cdot
       (\beta \nu - \varphi(\nu)) \ge 0
    \,\,\, \forall \nu \in \R.
  \end{align}
\end{definition}
The interpretation of the sector $[\alpha,\beta]$ is  that
$\varphi$ lies between lines passing through the origin with slope
$\alpha$ and $\beta$.  Many activation functions are bounded in the
sector $[0,1]$, e.g. $\tanh$ and ReLU. 
Figure~\ref{fig:tanh} illustrates $\varphi(\nu) = \tanh(\nu)$ (blue solid)
and the global sector defined by $[0,1]$ (red solid lines).
%The sector constraint is used in the classical circle criterion %to
%assess global stability of feedback systems with nonlinear %elements,
%e.g. saturations \cite{Khalil:2002}.  It is an instance of the %more
%general integral quadratic constraint framework %\cite{Megretski:97} as
%discussed later.
\vspace{-0.1in}
\begin{figure}[h]
  \centering
  \includegraphics[width=0.42\textwidth]{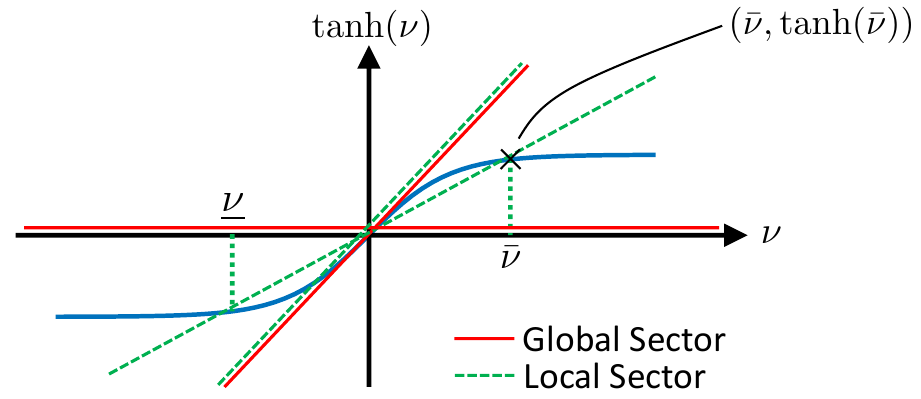}
  \caption{Sector constraints on $\tanh$}
  \label{fig:tanh}    
\end{figure}
\vspace{-0.1in}

The global sector is often too coarse for
analysis; thus, we consider a local sector  for
tighter bounds.
%on the input/output behavior.

\begin{definition}
  \label{def:LocalSector}
  Let $\alpha$, $\beta$, $\underline \nu$, $\bar \nu \in\R$ 
  with $\alpha \le \beta$ and $\underline \nu \le 0 \le \bar \nu$.
  The function $\varphi: \R \rightarrow \R$ satisfies the local sector
  $[\alpha,\beta]$ if
    $( \varphi(\nu) - \alpha \, \nu ) \cdot
     (\beta \, \nu - \varphi(\nu)) \ge 0
    \,\,\, \forall \nu \in [\underline{\nu},\bar{\nu} ].$
\end{definition}

As an example,  $\varphi(\nu):=\tanh(\nu)$ restricted to the
interval $[-\bar{\nu},\bar{\nu}]$  satisfies the local
sector bound $[\alpha,\beta]$ with
$\alpha:=\tanh(\bar\nu)/\bar{\nu}>0$ and
$\beta:=1$. As shown in Figure~\ref{fig:tanh}  (green
dashed lines), the local sector provides a tighter bound than the global sector.  These bounds are
valid for a symmetric interval around the origin with
$\underline\nu = -\bar\nu$; non-symmetric intervals ($\underline\nu \ne -\bar \nu$) can be handled similarly.

\subsection{Quadratic Constraints: Combined Activation Functions}

Local sector constraints can also be defined for the combined
nonlinearity $\phi$, given by \eqref{eq:phi_def}.  Let
$\underline{v}, \bar{v} \in \R^{n_\phi}$ be given with
$\underline{v} \leq 0 \leq \bar{v}$. Suppose that the activation input
$v_\phi \in \R^{n_\phi}$ lies element-wise in the interval
$[\underline{v},\bar{v}]$ and the $i^{th}$ input/output pair is
$w_i = \varphi( v_i )$, and the scalar activation function
satisfies the local sector $[\alpha_i,\beta_i]$ with the input restricted to
$v_i \in [\underline{v}_i,\bar{v}_i]$ for $i=1,\ldots,n_\phi$.  The
local sector bounds can be computed for $\varphi$ on the given
interval analytically (as above for $\tanh$). These local sectors can  be stacked into vectors
$\alpha_\phi, \beta_\phi \in \R^{n_\phi}$ that provide %(element-wise)
%local sector bounds, 
%for the combined nonlinearity $\phi$.  These local
%sector bounds lead to a set of 
%leading to a 
QCs satisfied by the
combined nonlinearity $\phi$.
 
\begin{lemma} \label{lemma:sectorQC} Let $\alpha_\phi$, $\beta_\phi$,
  $\underline v$, $\bar v \in\R^{n_\phi}$ be given with
  $\alpha_\phi \le \beta_\phi$, and $\underline v \le 0 \leq \bar v$. Suppose that $\phi$ satisfies the local sector
  $[\alpha_\phi,\beta_\phi]$ element-wise for
  all $v_\phi \in [\underline v,\bar v]$. For any $\lambda \in \R^{n_\phi}$
  with $\lambda \ge 0$, and for all $v_\phi \in [\underline v,\bar v], 
  \, w_\phi = \phi( v_\phi )$, it holds 
  \begin{align}
    & {\small\bmat{v_\phi \\ w_\phi}^\top 
    \bmat{-2A_\phi  B_\phi \Lambda & (A_\phi + B_\phi)\Lambda \\ (A_\phi + B_\phi)\Lambda & -2 \Lambda}
    \bmat{v_\phi \\ w_\phi} \ge 0 }, \label{eq:local_QC}
  \end{align}
  where $A_\phi = diag(\alpha_\phi), \ B_\phi = diag(\beta_\phi)$, and $\Lambda = diag(\lambda).$
\end{lemma}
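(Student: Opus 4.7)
The plan is a direct algebraic verification: expand the quadratic form on the left-hand side of \eqref{eq:local_QC}, rewrite it as a nonnegative linear combination of the scalar local sector inequalities supplied by Definition~\ref{def:LocalSector}, and conclude using $\lambda \ge 0$.

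First I would use $\phi$ being applied element-wise together with the hypothesis that each scalar $\varphi$ satisfies the local sector $[\alpha_i,\beta_i]$ on $[\underline v_i,\bar v_i]$ to write, for every $v_\phi\in[\underline v,\bar v]$ and $w_\phi = \phi(v_\phi)$,
\begin{align*}
(w_i - \alpha_i v_i)(\beta_i v_i - w_i) \;\ge\; 0, \qquad i=1,\dots,n_\phi,
\end{align*}
where $v_i$ and $w_i$ denote the $i^{\text{th}}$ entries of $v_\phi$ and $w_\phi$. Expanding the product gives the quadratic identity
\begin{align*}
(w_i - \alpha_i v_i)(\beta_i v_i - w_i) = -\alpha_i\beta_i v_i^2 + (\alpha_i+\beta_i) v_i w_i - w_i^2.
\end{align*}

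Next I would multiply the $i^{\text{th}}$ inequality by $\lambda_i \ge 0$ and sum over $i$. Using $A_\phi = \operatorname{diag}(\alpha_\phi)$, $B_\phi=\operatorname{diag}(\beta_\phi)$, and $\Lambda=\operatorname{diag}(\lambda)$, the weighted sum can be packaged as
\begin{align*}
2\sum_{i=1}^{n_\phi} \lambda_i(w_i-\alpha_i v_i)(\beta_i v_i - w_i) &= -2 v_\phi^\top A_\phi B_\phi \Lambda\, v_\phi \\
&\quad + 2 v_\phi^\top (A_\phi+B_\phi)\Lambda\, w_\phi \\
&\quad - 2 w_\phi^\top \Lambda\, w_\phi,
\end{align*}
which is exactly the quadratic form in \eqref{eq:local_QC} since all of $A_\phi,B_\phi,\Lambda$ are diagonal (hence the off-diagonal block $(A_\phi+B_\phi)\Lambda$ is symmetric and contributes the $2 v_\phi^\top(\cdot)w_\phi$ cross term). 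Because every summand on the left is nonnegative under the stated constraints on $v_\phi$ and $w_\phi=\phi(v_\phi)$, the quadratic form is nonnegative, completing the proof.

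No real obstacle is expected: the result is a routine ``S-procedure style'' aggregation of scalar sector inequalities, and the only care required is bookkeeping with the diagonal factors $A_\phi,B_\phi,\Lambda$ and the factor of $2$ that arises from the symmetric off-diagonal block. I would present the proof in two short steps: (i) expand the scalar sector product, and (ii) collect the $\lambda$-weighted sum into matrix form.
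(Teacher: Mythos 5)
Your proposal is correct and follows essentially the same route as the paper: both expand the quadratic form as a $\lambda$-weighted sum of the scalar sector products $(w_i-\alpha_i v_i)(\beta_i v_i - w_i)$ and invoke nonnegativity of each summand on $[\underline v,\bar v]$. Your bookkeeping is in fact slightly more careful than the paper's, since the quadratic form equals \emph{twice} that weighted sum (a factor the paper's displayed identity omits, harmlessly for the sign conclusion).
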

\begin{proof}
%  For any $v_\phi \in \R^{n_\phi}$ and $w_\phi = \phi( v_\phi )$: \text{left-hand side of \eqref{eq:local_QC}}
%      $=\sum_{i=1}^{n_\phi} \lambda_i 
%      ( w_i - \alpha_i \, v_i ) \cdot
%      (\beta_i \, v_i -  w_i) $.
%  If $v_\phi \in [\underline v,\bar v]$
%  then each term in the sum is non-negative
%  by the local sector constraints and $\lambda \ge 0$.
The proof can be found in \cite[Lemma 1]{Yinstabanaly}.
\end{proof}

%The offset local sector bounds $(\alpha_\phi,\beta_\phi)$ are computed based on the equilibrium value $v_*$ and the selected interval $[\underline v,\bar v]$.  For a large NN it may be difficult to select this interval. In addition, the 

In order to apply the local sector bounds in the stability analysis, we must first compute the bounds $\underline{v}, \overline{v} \in \R^{n_\phi}$ on the activation input $v_\phi$. The process to compute the bounds is briefly discussed here with more details provided in \cite{Gowal2018, Yinstabanaly}. The first step is to find the smallest hypercube that bounds the state constraint set: $X \subseteq  \{x: \underline{x} \leq x \leq \overline{x}\}$. Therefore, $w^0$ (defined in \eqref{eq:def_w0}) is bounded by $\underline{w}^0=\underline{x}$ and $\overline{w}^0 = \overline{x}$. Define $c = \frac{1}{2}(\overline{w}^0 + \underline{w}^0)$, $r = \frac{1}{2}(\overline{w}^0 - \underline{w}^0)$, and denote $y^\top$ as the $i^{th}$ row of $W^1$. Then the first activation input $v^1 = W^1 w^0$ is bounded by $[\underline{v}^1, \overline{v}^1]$, where 
% \begin{align*}
% &\overline{v}_i^1 = y^\top c+ \sum_{j=1}^{n_0}|y_j r_j|, \ \text{and} \ \ \underline{v}_i^1 = y^\top c- \sum_{j=1}^{n_0}|y_j r_j|.
% \end{align*}
$\overline{v}_i^1 = y^\top c+ \sum_{j=1}^{n_0}|y_j r_j|, \ \text{and} \ \ \underline{v}_i^1 = y^\top c- \sum_{j=1}^{n_0}|y_j r_j|.$
If the activation functions $\phi^1$ are monotonically non-decreasing, then the first activation output $w^1$ is bounded by $\underline{w}^1 = \phi^1(\underline{v}^1)$ and $\overline{w}^1 = \phi^1(\overline{v}^1)$. This process can be propagated through all layers of the NN to obtain the bounds $\underline{v}, \overline{v} \in \R^{n_\phi}$ for the activation input $v_\phi$. The remainder of the paper will assume the local sector bounds have been computed as briefly summarized in the following property.
\begin{property}
  \label{prop:NNsector}  
  Let the state constraint set $X$ and its corresponding activation input bounds $\underline{v}, \overline{v}$ be given.  There exist
  $\alpha_\phi$, $\beta_\phi$ such that $\phi$
  satisfies the local sector for all $v_\phi \in [\underline{v}, \overline{v}]$.
\end{property}

\subsection{Lyapunov Condition}

This section uses a Lyapunov function and the local sector to compute an inner approximation for
the ROA of the feedback system of $G$ and $\pi$.  

\begin{theorem}(\hspace{0.01in}\cite{Yinstabanaly})
  \label{thm:NominalLyap}
  Consider the feedback system of plant $G$ in \eqref{eq:NominalSys}
  and NN $\pi$ in \eqref{eq:NNlong} with equilibrium point
  $x_*=0_{n_G \times 1}$, and the state constraint set $X$. Let
  $\bar{v}, \underline{v},
  \alpha_\phi, \beta_\phi \in \R^{n_\phi}$ be given vectors
  satisfying Property~\ref{prop:NNsector} for the NN and the set $X$.  Denote the
  $i^{th}$ row of the matrix $H$ by $H_i^\top$ and define 
  \begin{align*}
    R_V :={\small\bmat{I_{n_G} & 0_{n_G \times n_\phi} \\ N_{ux} & N_{uw}} }, 
    \, \mbox{ and } \,
    R_\phi := {\small\bmat{N_{vx} & N_{vw} \\ 0_{n_\phi \times n_G} & I_{n_\phi}}}.
  \end{align*}
  If there exists a matrix $P \in \mathbb{S}_{++}^{n_G}$, and vector
  $\lambda \in \R^{n_\phi}$ with $\lambda \ge 0$ such that $\Lambda = diag(\lambda)$ satisfying
  \begin{subequations}
  \begin{align} 
    \nonumber
    & R_{V}^\top \bmat{A_G^\top P A_G - P & A_G^\top P B_G 
     \\ B_G^\top P A_G & B_G^\top P B_G} R_{V}  \\
    & \hspace{0.3in}
     +  R_{\phi}^\top \bmat{-2A_\phi  B_\phi \Lambda & (A_\phi + B_\phi)\Lambda \\ (A_\phi + B_\phi)\Lambda & -2 \Lambda} R_{\phi} < 0, 
    \label{eq:diss_nominal} \\
    &{\small\bmat{h_i^2 & H_i^\top \\ H_i & P}  }
    \ge 0, \,\,\,i = 1,\cdots, n_X, 
    \label{eq:setcontain}
  \end{align}
  \end{subequations}
   then: (i) the feedback system consisting of $G$ and $\pi$ is locally asymptotically
  stable around $x_*$, and (ii) the set $\mathcal{E}(P)$, defined by \eqref{eq:epsil_def}, is an 
  inner-approximation to the ROA.
\end{theorem}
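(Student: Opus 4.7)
The plan is to use the quadratic function $V(x)=x^\top P x$ as a Lyapunov candidate and establish that $\mathcal{E}(P)$ is a forward-invariant sublevel set contained in the safe set $X$, along which $V$ strictly decreases.

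First I would show that the set-containment LMI \eqref{eq:setcontain} implies $\mathcal{E}(P)\subseteq X$. Since $P\succ 0$, a Schur complement applied to each of the $n_X$ block matrices yields $h_i^2 \ge H_i^\top P^{-1} H_i$. Then for any $x\in\mathcal{E}(P)$, a Cauchy--Schwarz argument gives $(H_i x)^2 \le (H_i^\top P^{-1} H_i)(x^\top P x) \le h_i^2$, so $-h \le Hx \le h$ holds element-wise and $x\in X$.

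Second, I would rewrite the closed-loop one-step map algebraically in terms of $R_V$ and $R_\phi$. From the LFT form \eqref{eq:NNlft} and the definitions of $R_V, R_\phi$, we have $\bmat{x \\ u}=R_V\bmat{x \\ w_\phi}$ and $\bmat{v_\phi \\ w_\phi}=R_\phi\bmat{x \\ w_\phi}$, whence $x(k+1)=[A_G\ B_G]R_V\bmat{x(k)\\ w_\phi(k)}$. Consequently, left- and right-multiplying the first block in \eqref{eq:diss_nominal} by $\bmat{x^\top, w_\phi^\top}$ and its transpose produces exactly $V(x(k+1))-V(x(k))$. For any $x(k)\in\mathcal{E}(P)\subseteq X$, Property~\ref{prop:NNsector} ensures $v_\phi(k)\in[\underline v,\bar v]$, so Lemma~\ref{lemma:sectorQC} (applied with the hypothesized $\lambda\ge 0$) guarantees that the $R_\phi$-block in \eqref{eq:diss_nominal} contributes a non-negative term when sandwiched by $\bmat{x \\ w_\phi}$. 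The strict LMI \eqref{eq:diss_nominal} therefore yields $V(x(k+1))-V(x(k))<0$ whenever $\bmat{x \\ w_\phi}\ne 0$.

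Finally, I would combine these pieces into the two conclusions. Strict decrease of $V$ on $\mathcal{E}(P)\setminus\{0\}$ implies $\mathcal{E}(P)$ is forward-invariant (since $V(x(k+1))<V(x(k))\le 1$), so $x(k)\in\mathcal{E}(P)\subseteq X$ for all $k\ge 0$ and the safety condition is preserved along trajectories initialized in $\mathcal{E}(P)$. Standard discrete-time Lyapunov arguments then give local asymptotic stability at $x_*=0$ with $\mathcal{E}(P)$ as a certified inner-approximation of the ROA.

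The main technical point to be careful about is the strictness: the LMI is strict, so I must argue that $\bmat{x \\ w_\phi}\ne 0$ whenever $x\ne 0$. This follows from the layerwise structure $v^i=W^i w^{i-1}$ and $w^i=\phi^i(v^i)$ together with $\varphi(0)=0$, which forces $w_\phi=0$ only when propagated from $x=0$ through the network. Beyond this, the argument is a routine assembly of Schur complement, quadratic-constraint multiplication, and Lyapunov invariance, with no further obstacles expected.
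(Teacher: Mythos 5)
Your proposal is correct and follows essentially the same route as the paper's proof: Schur complement of \eqref{eq:setcontain} to get $\mathcal{E}(P)\subseteq X$, multiplication of \eqref{eq:diss_nominal} by $\bigl[x(k)^\top \ w_\phi(k)^\top\bigr]$ to obtain the Lyapunov decrement, Lemma~\ref{lemma:sectorQC} to discard the sector term, and a standard invariance/stability conclusion. The only differences are cosmetic: you replace the paper's citations (a set-containment lemma from the literature and a contradiction argument for invariance) with short self-contained arguments (Cauchy--Schwarz and one-step induction on $V(x(k+1))<V(x(k))\le 1$), both of which are valid.
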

% XXX if we allow for a bias term 1, the ROA is not centered around
% equilibrium point [vunderbar, v bar] is assumed to be symmetric
% around v_* since for now ROA is symmetric around x_*
\begin{proof}
The proof can be found in \cite[Theorem 1]{Yinstabanaly}.
\end{proof}

The Lyapunov condition~\eqref{eq:diss_nominal} is convex
in $P$ and $\lambda$ if the weight matrix $N$ is given, and thus we can efficiently compute the ROA inner-estimates. However, this condition is computationally intractable for NN controller synthesis, as it is nonconvex  if we search over $N$, $P$, and $\lambda$ simultaneously.

\section{Convex Stability and Safety Conditions}\label{sec:convex_condition}
In \cite{Pauli2020, revay2020convex}, $\alpha_\phi$ is set to zero to formulate convex constraints. However, this restriction is too coarse for stability analysis of NN controlled systems. Instead, we perform a loop transformation as shown in Fig.~\ref{fig:looptran} to convexify the stability condition without having restrictions on $\alpha_\phi$ and $\beta_\phi$.
\vspace{-0.2in}
\begin{figure}[h]
  \centering
  \includegraphics[width=0.42\textwidth]{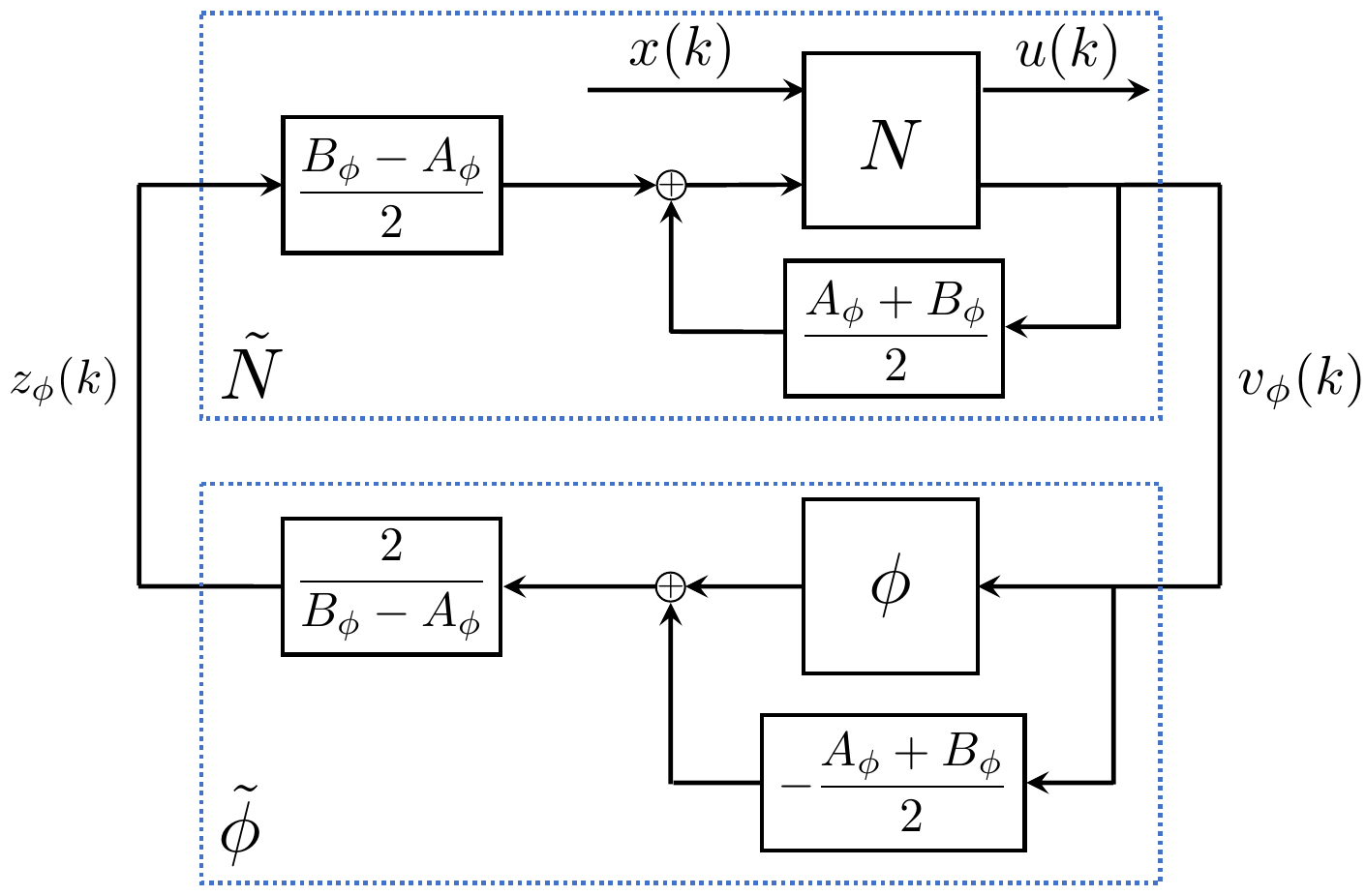}
  \caption{Loop transformation. If $\phi$ is in the sector $[\alpha_\phi, \beta_\phi]$, then $\tilde{\phi}$ is in the sector $[-1_{n_\phi \times 1}, 1_{n_\phi\times 1}]$. }
  \label{fig:looptran}    
\end{figure}

\subsection{Loop transformation}
Through loop transformation, we obtain a new representation of the NN controller, which is equivalent to \eqref{eq:NNlft},
\begin{subequations} \label{eq:NN_represent2}
  \begingroup
\allowdisplaybreaks
\begin{align}
    \begin{bmatrix}u(k)\\ v_\phi(k)\end{bmatrix} & = \tilde{N}  \begin{bmatrix}x(k) \\ z_\phi(k)\end{bmatrix}, \label{eq:NN_Ntilde}\\
    z_\phi(k) &= \tilde{\phi}(v_\phi(k)),
\end{align}
\endgroup
\end{subequations}
where $\tilde{N}$ and $\tilde{\phi}$ are defined in Fig.~\ref{fig:looptran}. Here, we also partition $\tilde{N}$ 
compatibly with the inputs $(x,z_\phi)$ and outputs $(u,v_\phi)$ 
\begin{align}
    \tilde{N} = {\small\bmat{\tilde{N}_{ux} & \tilde{N}_{uz} \\ \tilde{N}_{vx} & \tilde{N}_{vz}} }.
\end{align}

The loop transformation normalizes the nonlinearity $\tilde{\phi}$ to lie in the sector $[-1_{n_\phi \times 1}, 1_{n_\phi \times 1}]$. As a result, $\tilde{\phi}$ satisfies the sector QC 
% \footnote{Note that when substituting $A_\phi = I_{n_\phi}$ and $B_\phi = I_{n_\phi}$ in \eqref{eq:local_QC}, we get 
% \begin{align*}
%     {\small\bmat{v_\phi(k) \\ z_\phi(k)}^\top \bmat{2\Lambda & 0 \\ 0 & -2\Lambda} \bmat{v_\phi(k) \\ z_\phi(k)} \ge 0, \ \forall v_\phi \in [\underline{v}, \overline{v}].}
% \end{align*}
% The factor of 2 is absorbed into $\Lambda$ in \eqref{eq:shifted_local_QC}.} 
for any $\Lambda=diag(\lambda)$ with $\lambda \ge 0$:
\begin{align}
    {\small\bmat{v_\phi \\ z_\phi}^\top \bmat{\Lambda & 0 \\ 0 & -\Lambda} \bmat{v_\phi \\ z_\phi} \ge 0}, \ \forall v_\phi \in [\underline{v}, \overline{v}]. \label{eq:shifted_local_QC}
\end{align}
The input to $N$ is transformed by the following  equation:
\begin{align}
w_\phi(k) = \frac{B_\phi - A_\phi}{2} z_\phi(k) + \frac{A_\phi + B_\phi}{2} v_\phi(k).  \label{eq:input_Ntilde}
\end{align}
The transformed matrix $\tilde{N}$ can be computed by combining this relation with \eqref{eq:def_N}. Substituting \eqref{eq:input_Ntilde} into \eqref{eq:def_N} we obtain 
\begingroup
\allowdisplaybreaks
\begin{align}
    u(k) &= N_{ux} x(k) + C_1 z_\phi(k) + C_2 v_\phi(k), \label{eq:u_1}\\
    v_\phi(k) &= N_{vx} x(k) + C_3 z_\phi(k) + C_4 v_\phi(k), \label{eq:vphi_1} \\
\text{where} \ \ \ \ \ C_1 &= N_{uw}\frac{B_\phi - A_\phi}{2}, \  C_2 = N_{uw}\frac{A_\phi + B_\phi}{2}, \nonumber \\
C_3 &= N_{vw}\frac{B_\phi - A_\phi}{2}, \ C_4 = N_{vw}\frac{A_\phi + B_\phi}{2}. \nonumber
\end{align}
\endgroup
The expression for $v_\phi(k)$ can be solved from \eqref{eq:vphi_1}:
\begin{align}
    v_\phi(k) = (I - C_4)^{-1} N_{vx} x(k) + (I - C_4)^{-1} C_3 z_\phi(k). \label{eq:vphi_2}
\end{align}
Substituting \eqref{eq:vphi_2} into \eqref{eq:u_1} yields
\begingroup
\allowdisplaybreaks
\begin{align}
    u(k) &= (N_{ux}+C_2(I-C_4)^{-1}N_{vx}) x(k) \nonumber \\ 
    &~~~~~~~~~~~~~~~ + (C_1+C_2(I-C_4)^{-1}C_3) z_\phi(k).\label{eq:u_2} 
\end{align}
\endgroup
Matching \eqref{eq:vphi_2} and \eqref{eq:u_2} with \eqref{eq:NN_Ntilde} we get 
\begin{align*}
    &\tilde{N} = {\small\bmat{N_{ux}+C_2(I-C_4)^{-1}N_{vx} & C_1+C_2(I-C_4)^{-1}C_3 \\ (I - C_4)^{-1}N_{vx} & (I - C_4)^{-1}C_3}.}
\end{align*}
%  As a result, $\tilde{N}$ is a function of $N$\footnote{Note that besides $N$, $\tilde{N}$ also depends on $A_\phi$ and $B_\phi$. We can represent their relationship as $\tilde{N} = f'(N,A_\alpha,B_\alpha)$. Importantly, since $A_\alpha,B_\alpha$ are also functions of $N$, we have $\tilde{N} = f'(N,A_\alpha(N),B_\alpha(N)):= f(N)$.}, and we denote this relationship concisely as 
% \begin{align}
% \tilde{N} = f(N). \label{eq:f_def}
% \end{align}
% Later, we will treat $\tilde{N}$ instead of $N$ as a decision variable in the stability condition.
Thus, $\tilde{N}$ is a function of $N$ denoted concisely as $\tilde{N}=f(N)$. It is important to note that $\tilde{N}$ depends on $N$ both directly, and also indirectly through its dependence on $(A_\phi,B_\phi$).  Specifically, suppose both $N$ and a hypercube state bound $[\underline{x},\bar{x}]$ are given.  Then $\tilde{N}$ is constructed by: (i) propagating $[\underline{x},\bar{x}]$ through the NN to compute bounds $[\overline{v}, \underline{v}]$ on the activation inputs, (ii) computing local sector bounds $(A_\phi,B_\phi)$ consistent with $[\overline{v}, \underline{v}]$, and (iii) performing the steps in this section to compute $\tilde{N}$ from $(N,A_\phi,B_\phi)$.  

\vspace{-0.1in}
\subsection{Stability condition after loop transformation}
 Similar to the original Lyapunov condition~\eqref{eq:diss_nominal}, the new Lyapunov condition for the feedback system of $G$ in \eqref{eq:NominalSys} and NN in \eqref{eq:NN_represent2} can be written as
   \begingroup
\allowdisplaybreaks
\begin{align}
    &\tilde{R}_{V}^\top \left[\begin{smallmatrix}A_G^\top P A_G - P & A_G^\top P B_G 
     \\ B_G^\top P A_G & B_G^\top P B_G\end{smallmatrix}\right] \tilde{R}_{V}  +  \tilde{R}_{\phi}^\top \left[\begin{smallmatrix} \Lambda & 0 \\ 0 & -\Lambda\end{smallmatrix} \right]\tilde{R}_{\phi} < 0, \label{eq:new_lyap} \\
     &\text{where} \ \tilde{R}_{V} = {\small\begin{bmatrix} I_{n_G} & 0 \\ \tilde{N}_{ux} & \tilde{N}_{uz} \end{bmatrix}}, \ \text{and} \ \tilde{R}_{\phi} = {\small\begin{bmatrix} \tilde{N}_{vx} & \tilde{N}_{vz} \\ 0 & I_{n_\phi} \end{bmatrix}.}\label{eq:RV_Rphi}
\end{align}
\endgroup
\vspace{-0.15in}
\begin{lemma}
Consider the feedback system of $G$ in \eqref{eq:NominalSys} and NN in \eqref{eq:NNlft} with the state constraint set $X$. If there exist a matrix $P \in \mathbb{S}_{++}^{n_G}$, and vector $\lambda \in \R^{n_\phi}$ with $\lambda \ge 0$ such that \eqref{eq:new_lyap} (where $\tilde{N} = f(N)$) and \eqref{eq:setcontain} hold, then: (i) the feedback system consisting of $G$ in \eqref{eq:NominalSys} and NN in \eqref{eq:NNlft} is locally asymptotically
  stable around $x_*$, and (ii) the set $\mathcal{E}(P)$ is a ROA inner-approximation for it.
\end{lemma}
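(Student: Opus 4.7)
The plan is to mirror the proof of Theorem~\ref{thm:NominalLyap}, exploiting the fact that the loop-transformed representation \eqref{eq:NN_represent2} is, by construction, input-output equivalent to the original NN representation \eqref{eq:NNlft}. Thus the closed-loop trajectories generated by $G$ with either representation coincide, so it suffices to build a Lyapunov argument using the transformed variables $(v_\phi, z_\phi)$ and the shifted sector constraint \eqref{eq:shifted_local_QC}.

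First I would handle the safety / sector-validity step exactly as in Theorem~\ref{thm:NominalLyap}: applying Schur complements to \eqref{eq:setcontain} yields $H_i^\top P^{-1} H_i \le h_i^2$ for $i=1,\ldots,n_X$, and by Lemma~1 of \cite{Hindi:98} this gives $\mathcal{E}(P) \subseteq X$. Combined with the activation-input propagation that produced $\underline v, \overline v$, any $x(k)\in \mathcal{E}(P)$ implies $v_\phi(k)\in[\underline v,\overline v]$, so the shifted local sector bound that underlies \eqref{eq:shifted_local_QC} is valid.

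Next I would perform the standard congruence transformation on \eqref{eq:new_lyap}: left/right multiply by $\begin{bmatrix} x(k)^\top & z_\phi(k)^\top\end{bmatrix}$ and its transpose. Using the definitions of $\tilde R_V$ and $\tilde R_\phi$ in \eqref{eq:RV_Rphi} together with the loop-transformed dynamics \eqref{eq:NN_Ntilde}, the block involving $A_G, B_G, P$ collapses to $x(k+1)^\top P x(k+1) - x(k)^\top P x(k)$ with the candidate $V(x) = x^\top P x$, while the block involving $\Lambda$ becomes $\begin{bmatrix} v_\phi(k)^\top & z_\phi(k)^\top \end{bmatrix}\begin{bmatrix}\Lambda & 0 \\ 0 & -\Lambda\end{bmatrix}\begin{bmatrix} v_\phi(k) \\ z_\phi(k)\end{bmatrix}$. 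By \eqref{eq:shifted_local_QC} this second term is nonnegative whenever $v_\phi(k)\in[\underline v,\overline v]$ and $z_\phi(k) = \tilde\phi(v_\phi(k))$; hence the strict inequality in \eqref{eq:new_lyap} forces $V(x(k+1)) - V(x(k)) < 0$ for all nonzero $x(k)\in\mathcal{E}(P)$.

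Finally, a standard Lyapunov argument (Theorem~4.1 of \cite{Khalil:2002}) yields local asymptotic stability of $x_*$, and the contradiction argument of \cite[Theorem 1]{IANNELLI2019} shows $\mathcal{E}(P)$ is forward invariant; combining invariance with the decrease of $V$ establishes that $\mathcal{E}(P)$ is a ROA inner-approximation. The only subtlety, and where I would be most careful, is verifying that the input-output equivalence $\tilde N = f(N)$ really does make the two NN representations generate identical $(x,u)$ trajectories, so that a Lyapunov function valid for \eqref{eq:NN_represent2} is equally valid for the physical closed loop \eqref{eq:NNlft}. This equivalence follows directly from the algebraic derivation leading to $\tilde N = f(N)$, in particular from the invertibility of $I - C_4$ used in \eqref{eq:vphi_2}, so it should cause no real difficulty.
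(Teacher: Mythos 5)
Your proposal is correct and follows essentially the same route as the paper: the paper's own proof simply asserts that \eqref{eq:new_lyap} and \eqref{eq:setcontain} give stability and the ROA inner-approximation for the loop-transformed representation \eqref{eq:NN_represent2} (by the argument of Theorem~\ref{thm:NominalLyap} with the shifted sector \eqref{eq:shifted_local_QC}), and then transfers the conclusion to \eqref{eq:NNlft} via the equivalence of the two NN representations. Your write-up just makes the Theorem~\ref{thm:NominalLyap}-style steps explicit (Schur complement, congruence with $[x(k)^\top \; z_\phi(k)^\top]$, invariance of $\mathcal{E}(P)$), including the correct observation that invertibility of $I-C_4$ underpins the equivalence, so no gap.
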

\begin{proof}
It follows from the assumption that \eqref{eq:new_lyap} and \eqref{eq:setcontain} hold that the  feedback system of $G$ in \eqref{eq:NominalSys} and NN in \eqref{eq:NN_represent2} is locally asymptotically stable around $x_*$, and $\mathcal{E}(P)$ is its ROA inner-approximation. Since the representations \eqref{eq:NNlft} and \eqref{eq:NN_represent2} of NN are equivalent, the feedback system of $G$ in \eqref{eq:NominalSys} and NN in \eqref{eq:NNlft} is identical to the feedback system of $G$ in \eqref{eq:NominalSys} and NN in \eqref{eq:NN_represent2}. As a result, the feedback system consisting of $G$ in \eqref{eq:NominalSys} and NN in \eqref{eq:NNlft} is locally asymptotically stable around $x_*$, and the set $\mathcal{E}(P)$ is a ROA inner-approximation for it.
\end{proof}

The new Lyapunov condition~\eqref{eq:new_lyap} is convex in $P$ and $\Lambda$ using $\tilde{N} = f(N)$, where $N$ is given. To incorporate the stability condition in the IL process, we will proceed by treating $\tilde{N} \in \R^{(n_u+n_\phi)\times(n_G+n_\phi)}$ as a decision variable along with $P$ and $\Lambda$, and try to derive a stability condition that is convex in $(P, \Lambda, \tilde{N})$. Substitute \eqref{eq:RV_Rphi} into \eqref{eq:new_lyap} to obtain
{\small
\begin{align}
    &\bigg[ \star \bigg]^\top \begin{bmatrix} P & 0 \\ 0 & \Lambda \end{bmatrix}\begin{bmatrix}A_G + B_G \tilde{N}_{ux} & B_G \tilde{N}_{uz} \\ \tilde{N}_{vx} & \tilde{N}_{vz}  \end{bmatrix}  - \begin{bmatrix} P & 0 \\ 0 & \Lambda\end{bmatrix} < 0. \nonumber
\end{align}}
Applying Schur complements yields the equivalent condition
{\small
\begin{align}
        {\small\begin{bmatrix} P & 0  & A_G^\top + \tilde{N}_{ux}^\top B_G^\top  & \tilde{N}_{vx}^\top \\ 0 & \Lambda & \tilde{N}_{uz}^\top B_G^\top & \tilde{N}_{vz}^\top  \\ A_G + B_G \tilde{N}_{ux} & B_G \tilde{N}_{uz} & P^{-1} & 0 \\\tilde{N}_{vx} & \tilde{N}_{vz} & 0 & \Lambda^{-1}\end{bmatrix}} > 0, \label{eq:big_lyap1}
\end{align}
}
\hspace{-0.14in} and $P > 0$, $\Lambda >0$. Now \eqref{eq:big_lyap1} is linear in $\tilde{N}$, but still nonconvex in $P$ and $\Lambda$. Multiplying  \eqref{eq:big_lyap1} on the left and right by $diag(P^{-1},\Lambda^{-1},I_{n_G},I_{n_\phi})$ we obtain 
{\small
\begin{align}
         {\small\begin{bmatrix} Q_1 & 0  & Q_1 A_G^\top + L_1^\top B_G^\top  & L_3^\top \\ 0 & Q_2 & L_2^\top B_G^\top & L_4^\top  \\ A_G Q_1 + B_G L_1 & B_G L_2 & Q_1 & 0 \\L_3 & L_4 & 0 & Q_2 \end{bmatrix}} > 0, \label{eq:big_lyap2} 
\end{align}}
\hspace{-0.12in} where $Q_1 = P^{-1} > 0$, $Q_2 = \Lambda^{-1} > 0$, 
$L_1 = \tilde{N}_{ux} Q_1$, $L_2 = \tilde{N}_{uz} Q_2$, $L_3 = \tilde{N}_{vx} Q_1$, and $L_4 = \tilde{N}_{vz} Q_2$. 

The stability condition~\eqref{eq:big_lyap2} is convex in the decision variables   $(Q_1, Q_2, L_1,\hdots, L_4)$, where $Q_1 \in \mathbb{S}_{++}^{n_G}$, $Q_2 \in \mathbb{S}_{++}^{n_\phi}$ and $Q_2$ is a diagonal matrix, $L_1 \in \R^{n_u \times n_G}$, $L_2 \in \R^{n_u \times n_\phi}$, $L_3 \in \R^{n_\phi \times n_G}$, and $L_4 \in \R^{n_\phi \times n_\phi}$. Variables $(P, \Lambda, \tilde{N})$ that satisfy the Lyapunov condition~\eqref{eq:new_lyap} can be recovered using the computed $(Q_1, Q_2, L_1,\hdots, L_4)$ through the following equations: $P = Q_1^{-1}, \Lambda = Q_2^{-1}$, and
\begin{align}
    &\tilde{N} = L Q^{-1}, \label{eq:Ntilde_LQ}\\
    &\text{where} \ Q := \left[\begin{smallmatrix}Q_1 & 0 \\ 0 & Q_2\end{smallmatrix}\right], \ \text{and} \ L := \left[\begin{smallmatrix}L_1 & L_2 \\ L_3 & L_4 \end{smallmatrix}\right].\label{eq:QL_def}
\end{align}
Thus, the convex stability condition~\eqref{eq:big_lyap2} allows us to search over $P$, $\Lambda$, and $\tilde{N}$ simultaneously.

Moreover, to enforce the safety condition $(x(k) \in X \ \forall k \ge 0)$ of the system, convex constraints on $Q_1$ are imposed:
\begin{align}
    H_i^\top Q_1 H_i \leq h_i^2, 
      \ i = 1,\cdots,n_X, \label{eq:setcontain_Q1}
\end{align}
which is derived directly from \eqref{eq:setcontain} by Schur complements, and using $Q_1 = P^{-1}$. Again, this ensures $\mathcal{E}(Q_1^{-1}) \subseteq X$.

Denote the LMIs \eqref{eq:big_lyap2}, \eqref{eq:setcontain_Q1} with $Q_1 >0$ and $Q_2 > 0$  altogether as LMI$(Q, L) > 0$, which will later be incorporated in the IL process to learn robust NN controllers.

\begin{remark}
Note that model uncertainties are not considered in the paper. They can be incorporated in \eqref{eq:diss_nominal} or \eqref{eq:new_lyap} using integral quadratic constraints (IQCs) as in \cite{Yinstabanaly}. However, to derive convex  conditions, like \eqref{eq:big_lyap2}, only limited types of uncertainties may be incorporated. This in turn will allow for Zames-Falb IQCs to refine the description of activation functions by capturing their slope restrictions.
\end{remark}

\vspace{-0.1cm}
\section{Safe Imitation Learning Algorithm}\label{sec:safeLearn}
Given  state and  control data  pairs from  the  expert demonstrations, we use a loss function $\mathcal{L}(N)$ to train NN controllers with weights $N$ to match the data. Common choices of the loss function include mean squared error, absolute error, and cross-entropy. In general, $\mathcal{L}(N)$ is non-convex in $N$. We propose the next optimization to solve the safe IL problem, 
\vspace{-0.2cm}
\begin{subequations}\label{eq:AMDD_opt}
\begin{align}
    \min_{N, Q, L} \ &\eta_1 \mathcal{L}(N) - \eta_2 \log \det (Q_1) \label{eq:cost_fcn}\\
\text{s.t.} \ &\text{LMI}(Q, L) > 0 \label{eq:stab_safe_constr}\\
&f(N)Q = L \label{eq:connect_constr} 
\end{align}
\end{subequations}
where $Q$ and $L$ are defined in \eqref{eq:QL_def}.
The optimization has separate objectives. The cost function~\eqref{eq:cost_fcn} combines the IL loss function with a term that attempts to increase the volume of $\mathcal{E}(Q_1^{-1})$ (which is proportional to $\det(Q_1)$ ). The parameters $\eta_1, \eta_2 > 0$ reflect the relative importance between imitation learning accuracy and size of the robustness margin. The optimization has two sets of decision variables: $N$ and $(Q, L)$. The former is involved in mimicking the expert behaviour, and the latter are involved in the stability and safety constraints \eqref{eq:stab_safe_constr}. The two sets of variables are connected through the equality constraint \eqref{eq:connect_constr}. Note that \eqref{eq:connect_constr} is equivalent to $f(N)= L Q^{-1}$, and the term on the right-hand side equals to $\tilde{N}$ from \eqref{eq:Ntilde_LQ}. Therefore, \eqref{eq:connect_constr} essentially means that the first set of decision variable $N$, after being transformed by the nonlinear function $f$, satisfies the stability and safety constraints. 

Similar to \cite{Pauli2020}, we use the alternating direction method of multipliers (ADMM) algorithm to solve this constrained learning problem. We first define an augmented loss function
\begin{align}
    &\mathcal{L}_a(N, Q, L, Y) = \ \eta_1 \mathcal{L}(N) - \eta_2 \log \det (Q_1) \nonumber \\
    &~~~~~+ \text{tr}\left(Y^\top \cdot \left(f(N)Q - L \right)\right)  + \frac{\rho}{2} \norm{f(N) Q -L }^2_F,
\end{align}
where $\norm{\cdot}_F$ is the Frobenius norm, $Y \in \R^{(n_u+n_\phi) \times (n_G + n_\phi)}$ is the Lagrange multiplier, and $\rho > 0$ is the regularization parameter typically affecting the convergence rate of ADMM. The ADMM algorithm takes the following form:

\noindent 1. $N$-update: $N^{k+1} = \arg \min_N \mathcal{L}_a(N, Q^k, L^k, Y^k)$.

\noindent 
\begin{align}
\text{2. $(Q,L)$-update: }&(Q, L)^{k+1} = \arg \min_{Q, L}\mathcal{L}_a(N^{k+1}, Q, L, Y^k) \nonumber \\
& ~~~~~~~~~~~~~~~~~~~~~ \text{s.t.} \ \text{LMI}(Q, L) > 0 \nonumber
\end{align}
\noindent 3. $Y$-update: If $\norm{f(N^{k+1})Q^{k+1}-L^{k+1}}_F \leq \sigma$, where $\sigma > 0$ is the stopping tolerance, then the algorithm has converged, and we have found a solution to \eqref{eq:AMDD_opt}, so terminate the algorithm. Otherwise, update $Y$ and return to step 1.
\begin{align}
&Y^{k+1} = Y^k + \rho \left( f(N^{k+1}) Q^{k+1}  -L^{k+1} \right)
\nonumber
\end{align}
The unconstrained optimization in Step 1 can be solved using gradient based algorithm. The optimization in Step 2 is convex, and can be solved effectively using SDP solvers. The variable $Y$ in Step 3 accumulates the deviation from the constraint~\eqref{eq:connect_constr}. %Since the loss function $\mathcal{L}(N)$ is generally nonconvex, and the constraint \eqref{eq:connect_constr} is also nonconvex, the proposed ADMM does not have the guarantee to converge to the global optima. However, any converged solution (even local optima) provides a safe NN controller with stability and safety guarantees. 
The individual steps are well-behaved: Gradient descent in Step 1 almost surely converges to local optima \cite{Jason2016GD} and Step 2 always obtains a global optimum.  However, the loss $\mathcal{L}(N)$ and constraint \eqref{eq:connect_constr} are nonconvex and hence the proposed ADMM does not guarantee convergence to a global optima. Convergence properties of ADMM without convexity are still subject to ongoing research \cite{wang2019global}. However, any converged solution provides a safe NN controller with stability and safety guarantees. 

In Step 2, $Q$ and $L$ introduce $\mathcal{O}(n_G + n_\phi)$ and $\mathcal{O}((n_G+n_\phi) \times (n_u+n_\phi))$ decision variables. The  complexity of primal/dual SDP solvers scales cubically with the number of decision variables. Hence Step 2 will be computationally expensive if the number of activation functions $n_\phi$ is large.

%\begin{remark}
%	When formulating \eqref{eq:AMDD_opt}, we use the constraint \eqref{eq:connect_constr}  instead of $f(N) = L Q^{-1}$ to render the optimization in Step 2 convex. To ensure $\|f(N)-L Q^{-1}\|_F$ is small when the ADMM algorithm converges (i.e. $\|f(N)Q-L \|_F$ is small), we can add a  term, $-\eta_3 \log\det(Q2)$ with $\eta_3 > 0$, in the cost function \eqref{eq:cost_fcn} to encourage $Q^{-1}$ to have small elements.
%\end{remark}

\vspace{-0.1cm}
\section{Examples}\label{sec:example}
In the following examples,  Step~1 in the ADMM algorithm is implemented on Tensorflow, and solved by ADAM~\cite{kingma2014adam}. Step~2 is formulated using CVX, and is solved by MOSEK. The mean squared error is chosen as the loss function $\mathcal{L}(N)$. The code is available at \href{https://github.com/heyinUCB/IQCbased_ImitationLearning}{https://github.com/heyinUCB/IQCbased\_ImitationLearning}

\subsection{Inverted pendulum}
Consider an inverted pendulum with mass $m = 0.15$ kg, length $l = 0.5$ m, and
friction coefficient $\mu = 0.5$ Nms$/$rad. The discretized and linearized dynamics are:
	\begin{align}
	\bmat{x_1(k+1) \\ x_2(k+1)} = \bmat{1 & \delta \\ \frac{g \delta}{l} & 1-\frac{\mu \delta}{ml^2}}\bmat{x_1(k) \\ x_2(k)} + \bmat{0\\ \frac{\delta}{ml^2}}u(k), \nonumber
	\end{align}
 where the states $x_1$, $x_2$ represent the angular position (rad) and velocity (rad/s), $u$ is the control
input (Nm), and $\delta = 0.02$~s is the sampling time. The state constraint set is $X = [-2.5, 2.5]\times [-6, 6]$. To generate state and control data pairs for IL, we design an explicit model predictive controller (MPC) to serve as the expert. By fitting a NN controller to the explicit MPC controller, we can expedite the evaluation of controllers during run-time \cite{MomoIL, karg2020efficient, SChen2018}. In this example, besides a NN controller, we will also provide its associated ROA inner-approximation that guarantees stability and safety. The NN controller is parameterized by a
2-layer, feedforward NN with $n_1 = n_2 = 10$ and $\tanh$ as the
activation function for both layers. Take $\rho = 1$, $\eta_1 = 100$, and $\eta_2 = 5$. The ADMM algorithm is terminated after 16 iterations, and $\norm{f(N)-LQ^{-1}}_F = 0.17$.

In Fig.~\ref{fig:penROA}, the plot on the left shows the learned NN controller with a blue surface, and state and control data pairs from expert demonstrations with orange dots; the plot on the right shows the ROAs of the MPC controller and the NN controller with oranges dots and a blue ellipsoid, respectively. We can notice that the ROA of the NN controller is tightly contained by the
state constraint set $X$ (shown with a gray rectangle), which guarantees the safety of the system. 
\vspace{-0.2in}
\begin{figure}[h]
	\centering
	\includegraphics[width=0.45\textwidth]{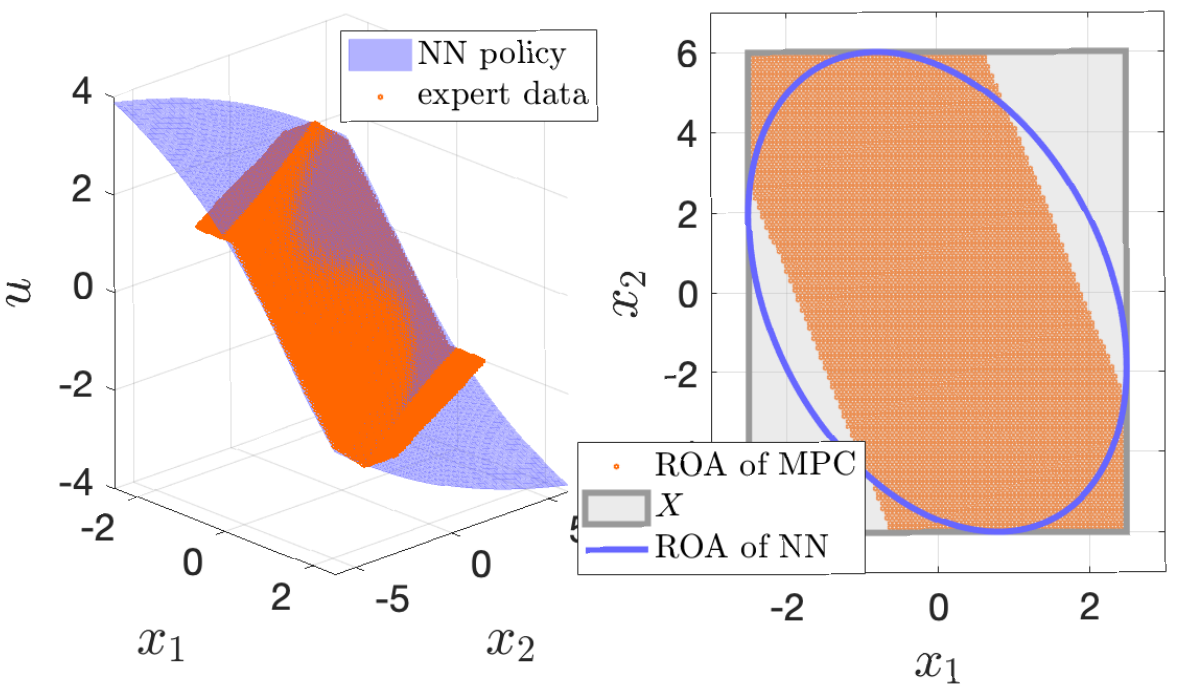}
	\caption{Left: NN controller vs. expert data from demonstrations; Right: ROAs of MPC controller and NN controller, and state constraint set $X$ of the inverted pendulum}
	\label{fig:penROA}    
\end{figure}

\subsection{Generic Transport Model}\label{sec:GTM_ex}
The Generic Transport Model (GTM) is a 5.5\% scale commercial aircraft. Linearizing and discretizing the longitudinal dynamics given in \cite{chakraborty2011nonlinear} with sampling time $\delta = 0.02$~s yields:
\begin{align}
\bmat{x_1(k+1) \\ x_2(k+1)} = \bmat{0.935 & 0.019 \\ -0.907 & 0.913}\bmat{x_1(k) \\ x_2(k)} + \bmat{-0.006\\ -1.120}u(k), \nonumber
\end{align}
where the states $x_1, x_2$ represent angle of attack (rad), and pitch rate (rad/s), and the control $u$ represents the elevator deflection (rad). Take the state constraint set as $X = [-2, 2] \times [-3, 3]$. In this example, we design an LQR controller to produce expert data. The NN controller is again parameterized by a 2-layer, feedforward NN with $n_1 = n_2 = 16$ and $\tanh$ as the
activation function for both layers. In this example, we will show how the parameter $\eta_2$ affects the result. To do so, two experiments are carried out using two sets of parameters $(\rho=1, \eta_1 = 100, \eta_2 = 5)$ and $(\rho=1, \eta_1 = 100, \eta_2 = 20)$, meaning that we care more about the size of the ROA inner-approximation, and less about the IL accuracy in the second experiment than we do in the first experiment. In both experiments, the ADMM algorithm is terminated in 20 iterations. 

The ROA inner-approximations of the NN controllers from the two experiments are shown in Figure \ref{fig:GTMROA}. The one computed with $\eta_2 = 5$  is shown with a magenta ellipsoid, and the one computed with $\eta_2 = 20$ is shown with a blue ellipsoid. First, it is important to notice that both NN controllers' ROA inner-approximations are larger than that of the expert's LQR controller (shown with a dashed gray ellipsoid), thanks to the second term in the cost function~\eqref{eq:cost_fcn}, which enhances the robustness of IL. Also, as expected, the ROA inner-approximation of the NN controller with $\eta_2 = 20$  is larger than that with $\eta_2 = 5$, since a larger $\eta_2$ leads to a larger ROA inner-approximation. However, the larger ROA inner-approximation comes at the cost of less accurate regression to the expert data. As shown in Figure~\ref{fig:GTMNN}, the mesh plot of the NN controller with $\eta_2 = 20$ (shown with a blue surface) is more off from the expert data (shown with orange stars) than that with $\eta_2 = 5$ (shown with a magenta surface).

\begin{figure}[h]
	\centering
	\includegraphics[width=0.3\textwidth]{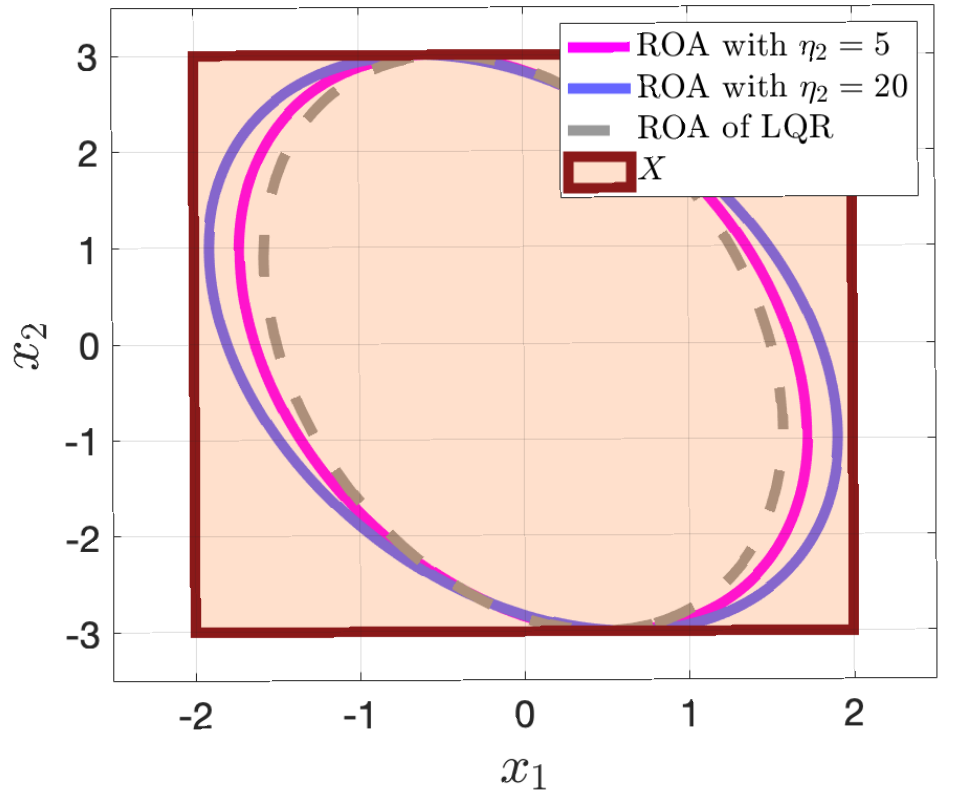}
	\caption{ROAs and state constraint set $X$ of GTM}
	\label{fig:GTMROA}    
\end{figure}

\vspace{-0.3cm}
\begin{figure}[h]
	\centering
	\includegraphics[width=0.35\textwidth]{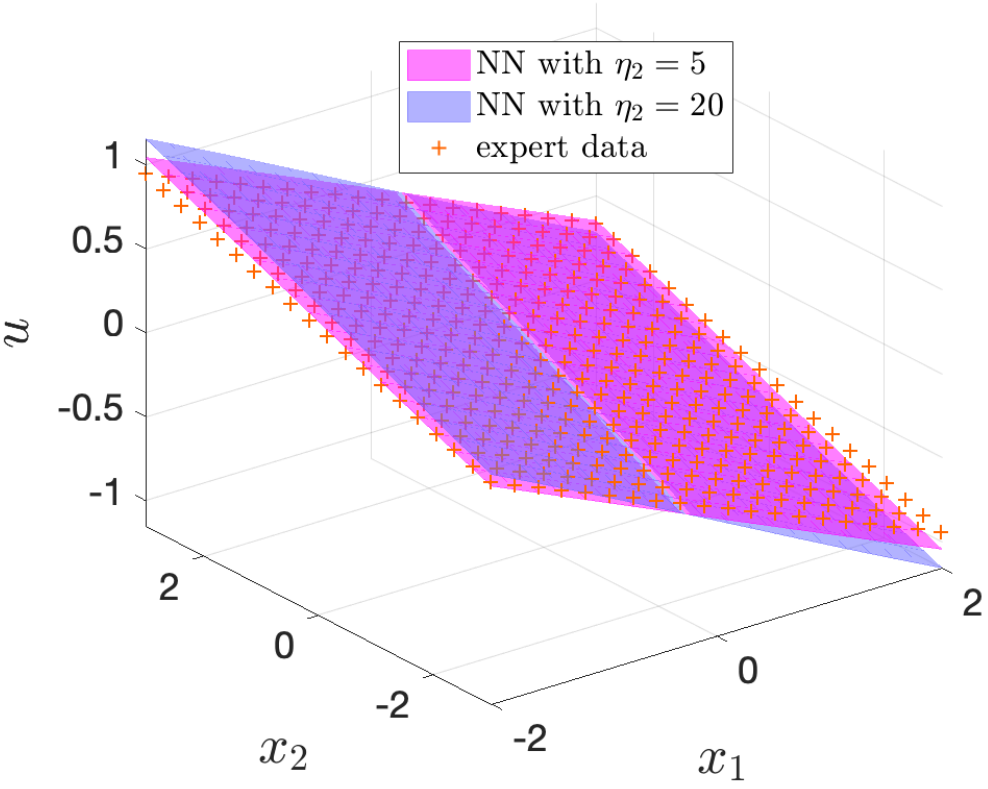}
	\caption{NN controllers vs. expert data of GTM}
	\label{fig:GTMNN}    
\end{figure}

\subsection{Vehicle Lateral Control}
Consider discretized vehicle lateral dynamics from \cite[Eqn 34]{Yinstabanaly} with sampling time $\delta=0.02$s, and a constant curvature $c \equiv 0$. Let $x = [e, \dot{e},e_\theta,\dot{e}_\theta]$ denote the plant state, where $e$ is the perpendicular distance to the lane edge (m),
and $e_\theta$ is the angle between the tangent to the straight section
of the road and the projection of the vehicle’s longitudinal
axis (rad). The
control $u$ is the steering angle of the front wheel (rad). The state constraint set is $X = [-2,2]\times[-5,5]\times[-1,1]\times[5,5]$. We design an explicit MPC law to serve as the expert, which is computed with an input constraint  $u(k) \in [-\pi/6, \pi/6]$ and a 5-step prediction horizon.  
%By fitting a NN controller to the explicit MPC law, we can expedite the evaluation of controllers during run-time \cite{MomoIL, karg2020efficient, SChen2018}. In this example, besides a NN controller, we will also provide its associated ROA that guarantees stability and safety. 
The NN controller is parameterized by a
2-layer NN with $n_1 = n_2 = 10$ and $\tanh$ as the
activation function for both layers. We perform
two experiments using two sets of parameters
$(\rho = 1000, \eta_1 = 100, \eta_2 = 100)$ and $(\rho = 1000, \eta_1 = 100, \eta_2 = 500)$. In
both experiments, the ADMM algorithm is terminated in 20
iterations, and the achieved $\|f(N)-L Q^{-1}\|_F$ are 0.14 and 0.08.

As shown in Fig.~\ref{fig:veh_ROA}, the ROAs of the NN controllers  with $\eta_2=100$ and $\eta_2=500$ are contained by the state constraint set $X$ shown with maroon boxes, guaranteeing the safety of the system. As expected, the volume of the ROA with $\eta_2=500$ is larger than that with $\eta_2 = 100$, and the achieved det$(Q_1)$ for $\eta_2=100$ and $\eta_2=500$ are $543$ and $663$. However, the larger ROA comes at the cost of less accurate regression to
the expert data. As shown in Fig.~\ref{fig:veh_sim}, the simulated state and control signals with $\eta_2=100$ are closer to those of the MPC law than those with $\eta_2=500$. 

It takes $0.011$ and $0.012$ second to run the simulations for the NN controllers for $500$ time steps with $\eta_2=100$ and $\eta_2=500$ on a laptop with an Intel core i5 processor, and it takes $0.38$ second  for the explicit MPC law, which demonstrates the run-time advantage of NNs over explicit MPC laws.
\vspace{-0.2cm}
\begin{figure}[h]
	\centering
	\includegraphics[width=0.45\textwidth]{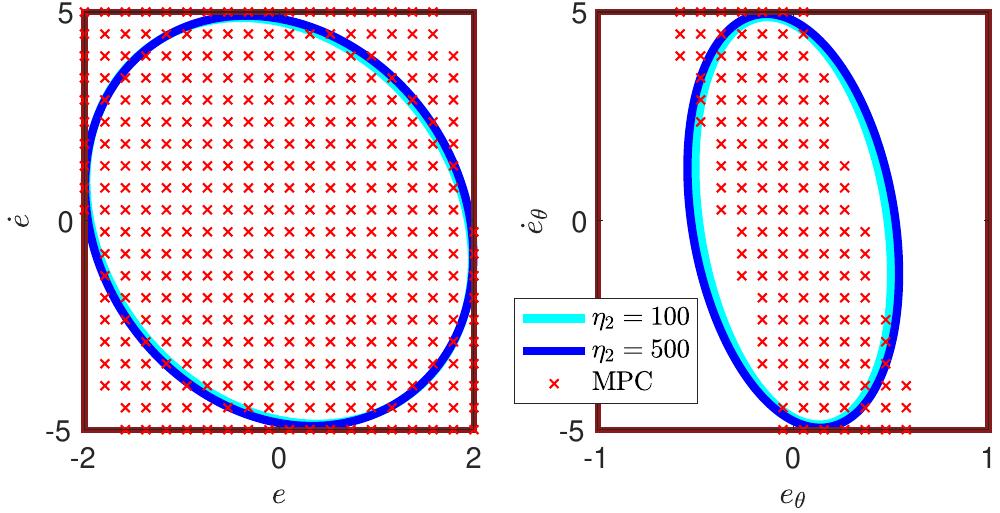}
	\vspace{-0.1cm}
	\caption{ROAs of the explicit MPC law (red crosses), and the NN controllers with $\eta_2=100$ (cyan curves) and $\eta_2 = 500$ (blue curves)}
	\label{fig:veh_ROA}    
\end{figure}
\begin{figure}[h]
	\centering
	\includegraphics[width=0.48\textwidth]{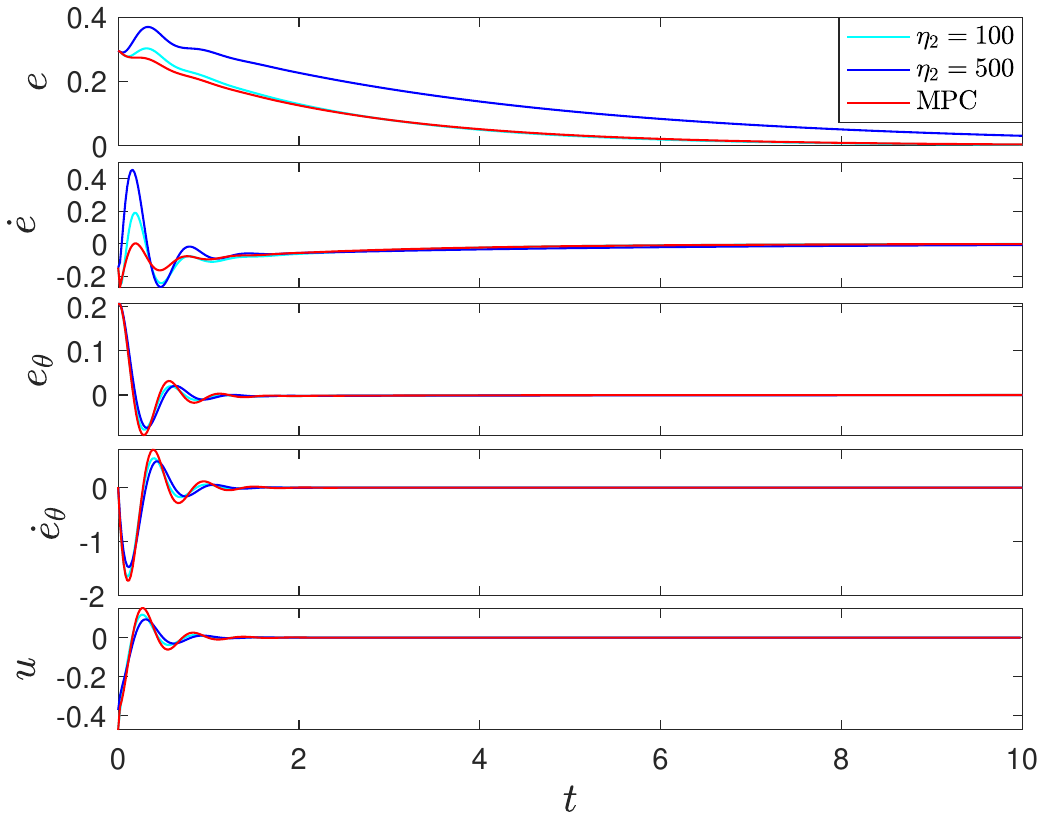}
	\vspace{-0.1cm}
	\caption{Simulated signals using the explicit MPC law (red curves), and the NN controllers with $\eta_2=100$ (cyan curves) and $\eta_2 = 500$ (blue curves)}
	\label{fig:veh_sim}    
\end{figure}

\section{Conclusions} \label{sec:conclusion} In this paper, we  present an IL algorithm with stability and safety guarantees. First, we derive convex stability and safety conditions for NN controlled systems. Then, we incorporate these conditions in the IL process, which trades off between the IL accuracy, and the size of the ROA. Finally, we propose an ADMM based algorithm to solve the safe IL problem. %Finally, we illustrate the  method on an inverted pendulum example.

\bibliographystyle{IEEEtran}
\bibliography{IEEEabrv,bibfile}
% \appendix
% \section{Appendices}
% \subsection{Derivation of the transformed matrix $\tilde{N}$}\label{apped:Ntilde}

\end{document}